\documentclass[a4paper]{article}

\usepackage[english]{babel}
\usepackage[utf8]{inputenc}
\usepackage{amsmath}
\usepackage{graphicx}
\usepackage[colorinlistoftodos]{todonotes}
\usepackage{stmaryrd} 
\usepackage{amsthm}
\usepackage{amssymb}
\newtheorem{theorem}{Theorem}
\theoremstyle{definition}
\newtheorem{definition}{Definition}

\usepackage{color}
\usepackage{listings}
\usepackage{caption}

\lstnewenvironment{algorithm}[1][] 
{
\captionsetup{labelsep=colon} 
\lstset{ 
    mathescape=true,
    frame=tB,
    numbers=left,
    numberstyle=\tiny,
    basicstyle=\scriptsize,
    keywordstyle=\color{black}\bfseries\em,
    keywords={,input, note, output, return, datatype, function, in, if, else, foreach, while, begin, end, }
    numbers=left,
    xleftmargin=.04\textwidth,
    #1 
}
}
{}
\usepackage[]{hyperref}
\hypersetup{
    pdftitle={Your title here},
    pdfauthor={Your name here},
    pdfsubject={Your subject here},
    pdfkeywords={keyword1, keyword2},
    bookmarksnumbered=true,
    bookmarksopen=true,
    bookmarksopenlevel=1,
    colorlinks=true,
    pdfstartview=Fit,
    pdfpagemode=UseOutlines,    
    pdfpagelayout=TwoPageRight
}

\title{Semantic Query Language for Temporal Genealogical Trees}

\author{Evgeniy Gryaznov}

\date{\today}

\begin{document}
\maketitle

\renewcommand{\abstractname}{}    
\begin{abstract}
Computers play a crucial role in modern ancestry management, they are used to collect, store, analyze, sort and display genealogical data. However, current applications do not take into account the kinship structure of a natural language.

In this paper we propose a new domain-specific language KISP which is based on a formalization of English kinship system, for
accessing and querying traditional genealogical trees. KISP is a dynamically typed LISP-like programming language with a rich set
of features, such as kinship term reduction and temporal information expression.

Our solution provides a user with a coherent genealogical framework that allows for a natural navigation over any traditional family tree.
\end{abstract}

\section{Introduction}

With the advent of computers, we are able to manage genealogies of incredible size. Now, due to the progress in storage engineering, it became possible to maintain and expand existing ancestries of considerable size. But merely keeping data on physical disks is not enough. To sufficiently realize the full potential of computers in genealogy management, one should also provide means of inquiry in ancestral data.

There is another important concept to consider when working with family trees, namely, the concept of time. A genealogy can exist only in specific temporal framework that is imposed on it by the very nature of history itself. As a result, any computer representation of an ancestry that lacks this framework is exorbitantly inadequate. Therefore, its preservation is a crucial feature for any software that is aimed for effective genealogy management.

If we want to teach computers understand lineage, we need to construct some type of artificial language that will allow us to effectively navigate and query any possible family tree. But observe, that an ancestry already has its own idiosyncratic terminology and grammar, which can be successfully utilized as a natural basis for such a language. Our research is an attempt to do exactly that.

\section{Related Work}
Ontologies find their natural application in the context of our work.
Since the original formulation of a concept, a lot of software has been developed to manage ontologies, including such systems as
Protege, Inge and others. These systems have already been heavily used in the variety of different fields.

For instance, Tan Mee Ting \cite{tanfamily} designed and
implemented a genealogical ontology using Protege and evaluated its consistency with Pellet, HermiT and FACT++ reasoners. He
showed that it is possible to construct a family ontology using \textit{Semantic Web} \cite{web} technologies with full capability of
exchanging family history among all interested parties. However, he did not address the issue of navigating the family tree using
kinship terms.

Ontological can be used to model any kind of family tree, but the problem arises when a user wants to query his relatives using
kinship terms. No standard out-of-the-box ontological query language is able to articulate statements such as in our example above.
Although an ontology can be tailored to do so, it is not in any way a trivial task. Maarten Marx \cite{xcpath} addressed
this issue, but in the different area. He designed an extension for XPath, the first order node-selecting language for XML.

Catherine Lai and Steven Bird \cite{ling} described the domain of linguistic trees and discussed the expressive requirements for a
query language. They presented a language that can express a wide range of queries over these trees, and showed that the
language is first order complete. This language is also an extension of XPath.

Artale et al. \cite{artale} did a comprehensive survey of various temporal knowledge representation formalisms.
In particular, they analysed ontological and query languages based on the linear
temporal logic LTL, the multi-dimensional Halpern-Shoham interval temporal logic, as well as metric temporal logic (MTL). They
noted that the W3C standard ontological languages, OWL 2 QL and OWL 2 EL, are designed to represent knowledge over a static domain,
and are not well-suited for temporal data.

Modelling kinship with mathematics and programming languages, such as LISP, has been an extensive area of research. Many people
committed a lot of work into the field, including Bartlema and Winkelbauer, who investigated \cite{bartlema} structures of a
traditional family and wrote a simple program that assigns fathers to children. Their main purpose was to understand how this
structure affects fertility, mortality and nuptiality rates. Although promising, small steps has been made towards designing a language to reason about kinship. Also, their program cannot express temporal information.

Another prominent attempt in modelling kinship with LISP was made \cite{findler} by Nicholas Findler, who examined various kinship structures and combined them together to create a LISP program that can perform arbitrary complex kinship queries. Although his solution is culture-independent, he did not take the full advantage of LISP as a programming language, and because of that it is impossible to express queries which are not about family interrelations. In contrast, our system does not suffer from that restriction.

More abstract, algebraic approach was taken \cite{read} by D. W. Read, who analysed the terminology of American Kinship in terms
of its mathematical properties. His algebra clearly demonstrates that a
system of kin terms obeys strict rules which can be successfully ascertained by formal methods. In another article \cite{read2} he
discusses how software, in the broader sense of intelligent systems, can help anthropologists understand foreign cultures.

Periclev et al. developed \cite{vlad} a LISP program called KINSHIP that produces the guaranteed-simplest analyses, employing a minimum number of features and components in kin term definitions, as well as two further preference constraints that they propose in their paper, which reduce the number of multiple component models arising from alternative simplest kin term definitions conforming to one feature set. The program is used to study the morphological and phonological properties of kin terms in English and Bolgarian languages.

According to authors \cite{read} and \cite{vlad} there have been many attempts to create adequate models of kinship based on
mathematics and computational systems. For instance, a prominent french mathematician Andre Weil analysed \cite{weil} the Murngin
system of kinship and marriage using \textit{group theory}. In particular, he showed how one can embed the nuptial rules of a
particular society into the framework of permutation groups $S_n$. His work was later extended by Bush \cite{bush}, who proposed
the concept of \textit{permutation matrices} as a more effective tool for analysis. Kemeny, Snell and Thompson were \cite{kemeny} first to
systematize the properties of prescriptive marriage systems as an integrated set of axioms. All distinct kinship structures which
satisfy these axioms were systematically derived and described by White \cite{white}, whose more practical \textit{generators} set
the structural analysis on a more concrete basis.

Similar attempt was made by John Boyd, who also used \cite{boyd} the apparatus of group theory to give a mathematical
characterization of the conditions under which groups become relevant for the study of kinship. He argued that the concept of
group extension and its specialization to direct and semi-direct products determine the evolutionary sequences and the coding of
these kinship systems.

Ernest Gellner discusses \cite{gellner}, from the pure philosophical point of view, the possibility of constructing an \textit{ideal
language} for an arbitrary kinship structure.

However, despite remarkable progress, there is a certain doubt in the mathematical community about the applicability of such
abstract approaches to the study of kinship. For example, White \cite{white} recognized the failure of his structural analysis of
societies like Purum or Murngin, which practice matrilateral cross-cousin marriage. Liu addresses \cite{liu} this problem with
establishing a new mathematical method for the analysis of prescriptive marriage systems.

We also note that none of the works include the temporal element into their formalisms, which provides novelty for our research.

\section{Formal Language of Kinship}
The study of kin structures has its roots in the field of anthropology. Among the first foundational works was Henry Morgan's \textit{magnum opus} "Systems of Consanguinity and Affinity of the Human Family"\cite{morgan}, in which he argues that all human societies share a basic set of principles for social organization along kinship\footnote{Recall that in this paper, the word "kinship" includes relatives as well as in-laws} lines, based on the principles of \textbf{consanguinity} (kinship by blood) and \textbf{affinity} (kinship by marriage).

Following Henry Morgan, we recognize two primary types of family bonds: marital (affinity) and
parental (consanguinity). These bonds define nine basic kin terms: \textit{father, mother, son, daughter, husband, wife, parent, child and spouse}. Observe that combining them in different ways will yield all possible kinship terms that can and do exist.

For instance, \textit{cousin} is \textit{a child of a child of a parent of a parent} of a particular person. Another example: \textit{mother-in-law} is just \textit{a mother of a spouse}.

Now let us represent a traditional Christian family tree as a special type of \textit{ontology} with its' own concepts,
attributes, relations and constraints. Concepts are people in a family, their attributes are: \textit{name, birth date,
birthplace, sex} and relations are parental and marital bonds with a wedding date.

Together with the everything stated above, we have the following cultural constraints imposed on our genealogy:
\begin{enumerate}
    \label{en:req}
    \item{Each person can have any finite number of children.}
    \item{Each person can have at most two parents of different sex.}
    \item{Each person can have at most one spouse of different sex.}
    \item{A spouse cannot be a \textit{direct relative}, i.e. a sibling or a parent. In other words, direct incest is
        prohibited.}
\end{enumerate}

When considering those prerequisites one should bear in mind that we deliberately focused only on rules, taboos and
customs of one particular culture, namely American culture in the sense of Read \cite{read}. Under different assumptions and
in further studies, these conditions can be relaxed and revisited.

Apart from these four, here are two additional temporal constraints that express the interrelation between birth and wedding
dates:
\begin{enumerate}
    \item{No one can marry a person before he or she was born, i.e. a wedding date can only be strictly after a
        birth date of each spouse.}
    \item{A parent is born strictly before all of his (her) children.}
\end{enumerate}
Due to the general nature of these two constraints, they are always true in every culture and therefore can be safely
assumed in our work.

Every genealogy that meets these six requirements we shall call a \textbf{traditional family tree}. As the name "tree" suggests,
we can indeed view this structure as a graph with its vertices as people and edges as bonds. Observe that every kinship term
corresponds exactly to a \textit{path} between ego and specified relative. Under such view, kin term becomes a set of instructions,
telling how to get from the starting vertex \texttt{A} to the end vertex \texttt{B}. For example, consider the term
\textit{mother-in-law}. What is it if not precisely a \textit{directive}: "firstly, go to my spouse, then proceed to her mother".
The wonderful thing is that, due to the nature of kinship terminology, we can \textit{compose} old terms together to create
new, even those which do not have their own name. This simple observation shows that we can see kinship terms as paths in a family
tree underpins our entire research.

Now, if we want to efficiently query a traditional family tree, we need to further investigate the mathematical features of
the language of kinship terms.

Here we present our attempt to model the language of traditional American, in the sense of Read\cite{read}, kinship terminology.
There are three main characteristics that define every formal language: its syntax (spelling, how words are formed), semantics
(what does particular word mean) and pragmatics (how a language is used).

\subsection{Syntax}
We use Backus-Naur Form to designate the syntax for our formal language. Let $\Sigma$ be the set of six basic kinship terms:
\textit{father, mother, son, daughter, husband, wife}. Then we can express the grammar as follows:
\begin{align*}
\label{al:syntax}
term ::= \Sigma | (term \cdot term) | (term \vee term) | (term)^{-1} | (term)^{\dagger}
\end{align*}
The first operation is called \textit{concatenation}, second -- \textit{fork}, third -- \textit{inverse} and the last --
\textit{dual}. We denote this language by $\mathcal{L}$.

Here are some examples of ordinary kinship terms expressed in our new language. Note that we deliberately omit superfluous
parentheses and the composition sign for the sake of simplicity:
\begin{itemize}
    \item{ Parent is $father \vee mother$. }
    \item{ Child is $son \vee daughter$. }
    \item{ Brother is $son (father \vee mother)$. }
    \item{ Sibling is $(son \vee daughter)(father \vee mother)$. }
    \item{ Uncle is $son(father \vee mother)(father \vee mother)$. }
    \item{ Daughter-in-law is $daughter \cdot husband$ }
    \item{ Co-mother-in-law is $mother(husband \vee wife)(son \vee daughter)$ }
\end{itemize}
From these examples you can see the real power of this language -- the power to express all possible
kinship terms.  Now the important step towards solving our main goal, developing a language for managing
temporal genealogies, is to assign meaning to these words. From now on we distinguish between \textit{artificial} kinship terms,
i.e. well-formed terms of our formalization, and \textit{natural} kin terms used in ordinary English. By referring to just terms,
we mean the former, if nothing else is stated.

\subsection{Semantics}
Let $\Sigma^*$ stand for the set of all possible kin terms generated from the basis $\Sigma$ using the previously defined syntax.
Let $\mathcal{G} = (V, E)$ be a traditional family tree with $V$ as a set of its vertices (people) and $E$ as a set of its
edges (bonds). Moreover, because $\mathcal{G}$ is traditional, every person from the set $V$ have the following attributes:
\begin{itemize}
    \item{A father. We will denote him as $father(p)$, a function that returns a \textit{set} containing at most one element.}
    \item{A mother. We will denote her by $mother(p)$.}
    \item{A set of his or her children: $children(p)$.}
    \item{A set of his or her sons: \[son(p) = \{c | c \in children(p) \land Male(p)\}\]}
    \item{A set of his or her daughters: \[daughter(p) = \{c | c \in children(p) \land Female(p)\}\]}
    \item{A spouse: $spouse(p)$.}
    \item{A husband: \[husband(p) = \{s | spouse(s) \land Male(p)\}\]}
    \item{A wife: \[wife(p) = \{s | spouse(s) \land Female(p)\}\]}
\label{it:basic-func}
\end{itemize}
Due to the constraints stated in \ref{en:req}, result-set of $father$, $mother$, $spouse$, $husband$ and $wife$ can contain at
most one element.

Now we are ready to introduce \textbf{Denotational Semantics} for $\Sigma^*$. This name was chosen because it highly resembles
its namesake semantics of programming languages. Note that we regard kinship terms as \textit{functions} on subsets of $V$. Each
function takes and returns a specific subset of all relatives, so its type is $f : \mathcal{P}(V) \to \mathcal{P}(V)$.

We proceed by induction on the syntactic structure of $\mathcal{L}$. Let $t$ be an element of $\Sigma^*$, then:
\begin{enumerate}
    \item{If $t \in \Sigma$, then $\llbracket t \rrbracket = F(t)$, where $F(t)$ assigns to each basic kin term its corresponding
        function from the list \ref{it:basic-func}.}
    \item{Term concatenation is a composition of two functions:
        \[\llbracket (t_1 \cdot t_2) \rrbracket = \llbracket t_1 \rrbracket \circ \llbracket t_2 \rrbracket\]}
    \item{Fork is a set-theoretic union of results of its sub-functions:
        \[\llbracket (t_1 \vee t_2) \rrbracket = p \mapsto \llbracket t_1 \rrbracket(p) \cup \llbracket t_2
    \rrbracket(p)\]}
    \item{Term inverse is exactly the inverse of its function:
        \[\llbracket t_1^{-1} \rrbracket = \llbracket t_1 \rrbracket^{-1}\]}
\end{enumerate}
The \textit{dual} operator ($\dagger$) is more difficult to define. We want it to mean exactly the same as the term, where the
gender of each its basic sub-term is reversed, e.g. dual of "uncle" is "aunt", dual of "brother" is "sister" and so on.
Here we can use induction once again:
\begin{enumerate}
    \item{If $t \in \Sigma$, then $\llbracket t \rrbracket = D(t)$, where $D(t)$ is a basic term of opposite sex.}
    \item{Dual is distributive over concatenation, i.e. dual of concatenation is a concatenation of duals:
        \[\llbracket (t_1 \cdot t_2)^\dagger \rrbracket = \llbracket (t_1^\dagger \cdot t_1^\dagger) \rrbracket\]}
    \item{Dual is distributive over forking:
        \[\llbracket (t_1 \vee t_2)^\dagger \rrbracket = \llbracket (t_1^\dagger \vee t_2^\dagger ) \rrbracket\]}
    \item{Inverse commutes with dual:
            \[\llbracket (t^{-1})^{\dagger} \rrbracket = \llbracket (t^{\dagger})^{-1} \rrbracket\]}
\end{enumerate}
Observe that we also have the distributivity of concatenation over forking.
This semantics allows us to efficiently navigate any family tree.

\subsection{Pragmatics}
Now, when syntax and semantics has been defined, let's discuss the applications of our new formalism.

One of the main goals in constructing our language was achieving \textit{cultural independence}. That is, the language should
assign a unique term to every relative in genealogy without relying on labels and kinship words from a particular society.
Although we use terms such as \texttt{father}, \texttt{daughter} and \texttt{husband} from English for the basis of the language,
we do that only for readability, since they can be easily replaced with abstract placeholders like $x$ or $y$.

The proposed formalism finds its natural application in the field of \textit{machine translation}. Languages drastically, and
often even incompatibly, differ in the way they express kinship information. The correct translation of kinship terms still poses
a challenge for linguists and anthropologists. For example, in Russian there is a word for a son of
a brother of ones' wife, but no corresponding term in English.
With our formalism we can encode the meaning of such words and use it to provide a more adequate translation between any possible
pair of natural languages.

We can also apply the formalism to the problem of \textit{cross-cousin marriage}: given the description of a particular society, a
genealogy of a family from that society and two individuals from it, ascertain whether the rules of their community allow them
to marry. This problem is extensively studied in the field of computational anthropology. What makes it especially
difficult is that each culture has its own peculiar set of regulations and laws regarding this subject.
In every case, one of the important steps towards solving the problem is to research and establish a correct model of that
society.  With our new formalism we can facilitate this process.

\section{Term Reduction}
\label{sec:reduc}
Our artificial language has a problem: its too verbose. Indeed, to encode such ubiquitous kin terms as "uncle" or "great-nephew" one
must use quite lengthy phrases that are hard to write and read. It is therefore important to have some sort of reduction mechanism
for our language that will shorten long terms into a small set of common kinship relations to help understand them.

Firstly, let us analyse the problem. We have the following mapping $\omega$ between $\Sigma^*$ and the set of English kinship terms
$\mathcal{W}$
\begin{align*}
    son(father \vee mother) &\mapsto \text{brother}\\
    daughter(father \vee mother) &\mapsto \text{sister}\\
    father(father \vee mother) &\mapsto \text{grandfather}\\
    mother(father \vee mother) &\mapsto \text{grandmother}\\
    son(son \vee daughter) &\mapsto \text{grandson}\\
    \vdots\\
    father(son \vee daughter)(wife \vee husband)(son \vee daughter) &\mapsto \text{co-father-in-law}
\end{align*}
This dictionary allows us to effectively translate between kin terms of our artificial language $\mathcal{L}$ and their usual
English equivalents. We can also view this mapping as a \textit{regular grammar} in the sense of Chomsky hierarchy\cite{chomsky}.
However, note that we strictly prohibit mixing these two collections and therefore we deliberately avoid
using words from the RHS in the LHS, because otherwise the grammar will lose its regularity and become \textit{at least}
context-free, making the problem even more challenging. Let us define another function on top of $\omega$ that will replace the
first sub-term $u \subset t$ in a term $t \in \Sigma^*$:
\begin{align*}
    \Omega_u(t) = t[u / \omega(u)]
\end{align*}
Here the only change in meaning of $t[u / \omega(u)]$ is that the substitution takes place only once.

Now the task can be stated thusly: given a term $t \in \Sigma^*$ find its shortest (in terms of the number of concatenations)
translation under $\Omega$, i.e. which sub-terms need to be replaced and in what order.

This problem can be easily reduced to that of of finding the desired point in the tree of all possible
substitutions. Moreover, this vertex is actually a \textit{leaf}, because otherwise it is not the shortest one. But the latter
can be solved by just searching for this leaf in-depth. Unfortunately, the search space grows exponentially with the
number of entries in the dictionary $\omega$, thus making the naive brute-force approach unfeasible.

Here we propose a heuristic greedy algorithm \ref{algo:red} that, although does not work for all cases, provides an expedient
solution to the reduction problem in $O(n^2)$ time. Firstly, it finds the longest sub-term $u$ that exists in the dictionary
$\omega$, then divides the term into two parts: left and right from $u$, after that it applies itself recursively to them, and
finally it concatenates all three sub-terms together.

Now let's analyze the time complexity of this algorithm:
\begin{theorem}
    The execution time of the algorithm listed in \ref{algo:red} belongs to $\Theta(n^2)$.
\end{theorem}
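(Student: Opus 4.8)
The plan is to prove the two bounds $O(n^2)$ and $\Omega(n^2)$ separately, where $n$ is the size of the input term $t$ (its number of symbols, equivalently the number of nodes of its parse tree). First I would pin down the cost of a single invocation, excluding the recursive calls: it scans $t$ for the longest sub-term lying in $\omega$, splits $t$ at that sub-term, recurses on the two flanking pieces, and concatenates three parts. Since $\omega$ is a fixed finite dictionary — constantly many entries, each of constant size — testing every one of the $O(m)$ sub-terms of a term of size $m$ against $\omega$ costs $O(m)$, and the split and final concatenation are $O(m)$ as well; conversely the routine must at least read its argument, so its non-recursive cost is $\Theta(m)$. Writing $T(m)$ for the running time on a term of size $m$, this gives a recurrence of the shape
\[
  T(m) \;=\; T(\ell) + T(r) + \Theta(m), \qquad \ell + r \le m - 1,
\]
the inequality holding because the matched sub-term that is removed has size at least $1$, and the recursion taking place on the two flanking pieces only (the inserted English word is an atom and is not revisited).

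For the upper bound I would run the standard recursion-tree argument on this recurrence. Every recursive call is on a strictly smaller term, so the recursion tree has depth at most $n$; the total size of all sub-problems on any fixed level is non-increasing from level to level and hence never exceeds $n$; and since the non-recursive work at a node is linear in that node's size, the work summed over each level is $O(n)$. Multiplying by the $O(n)$ levels yields $T(n) = O(n^2)$. (Equivalently one shows $T(n) \le cn^2$ by induction, using $T(\ell)+T(r) \le c(\ell^2+r^2) \le c(\ell+r)^2 \le c(n-1)^2$ and absorbing the linear term.)

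For the matching lower bound I would exhibit an explicit family on which the greedy choice is forced to be maximally unbalanced. Fix a left-hand-side entry of $\omega$, say $v = son(father \vee mother)$ (``brother''); since $\omega$ is finite, $v$ may be chosen so that no power $v^{j}$ with $j \ge 2$ lies in $\omega$ and no entry of $\omega$ occurs as a sub-term straddling two consecutive copies of $v$ (for ``brother'' this is immediate, as ``a brother of a brother'' has no distinct English term). Let $t_n = v \cdot v \cdots v$ be the concatenation of $n$ copies of $v$, so $|t_n| = \Theta(n)$. The matching sub-terms of $t_n$ of maximal length are exactly the $n$ individual occurrences of $v$, so the algorithm replaces the \emph{leftmost} occurrence, leaving an empty left part and the term $v^{\,n-1}$ as the right part. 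Iterating, the computation degenerates into a chain of $n$ invocations, the $i$-th acting on a term of size $\Theta(n-i)$ and therefore costing $\Omega(n-i)$; summing, $T(n) = \Omega(\sum_{i}(n-i)) = \Omega(n^2)$. Together with the upper bound this gives $T(n) = \Theta(n^2)$.

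The routine half is the upper bound; the delicate points are all on the lower-bound side. The main obstacle is to be sure that a genuinely quadratic input exists even though $\omega$ is only partially specified, and that greedy tie-breaking really does produce an unbalanced ``peeling'' recursion (depth $\Theta(n)$, with $\Theta(n)$ work per level) rather than a balanced split, which would cost only $\Theta(n\log n)$: the finiteness of $\omega$ supplies the first point, and the leftmost-match convention the second. A secondary subtlety is the cost model — the theorem is false if ``find the longest sub-term in $\omega$'' is implemented by blindly enumerating all substrings (that step would become $\Theta(m^2)$ and the unbalanced recursion would then cost $\Theta(n^3)$), so one must either commit to the linear implementation above or restrict attention to sub-terms of the parse tree.
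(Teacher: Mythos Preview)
Your approach differs substantially from the paper's. The paper does not separate upper and lower bounds at all: it writes the recurrence $T(n)=2T(n/2)+O(n^2)$, observes that the two nested \texttt{for} loops in the pseudocode give a quadratic per-call cost, and invokes case~3 of the Master Theorem (checking regularity with $c=\tfrac12$) to conclude $T(n)=\Theta(n^2)$ directly.

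The key discrepancy is the per-call cost model. You take it to be $\Theta(m)$, arguing that since $\omega$ has constantly many constant-size entries one need only scan $O(m)$ candidate positions; the paper takes it to be $\Theta(m^2)$, reading this off the nested loops that enumerate all $\Theta(m^2)$ index pairs $(i,j)$. Under the paper's cost model your recursion-tree upper bound would collapse to $O(n^3)$, and your adversarial peeling family would witness $\Omega(n^3)$ --- precisely the failure you flag in your final paragraph. Conversely, under your linear cost model the paper's Master-Theorem computation would give only $\Theta(n\log n)$. So the two arguments are really analysing different implementations, and each reaches $\Theta(n^2)$ by a different unproven assumption: the paper needs balanced splits (``each call receives roughly the half''), which it asserts without justification; you need a linear inner search, which the listed pseudocode does not provide. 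Your explicit lower-bound construction and your discussion of the cost-model issue are both more careful than anything in the paper, but for the theorem \emph{as stated} (about the algorithm \emph{as listed}) your $\Theta(m)$ per-call premise does not match the code.
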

\begin{proof}
    Let $T(n)$ be the execution time of the algorithm, where $n$ stands for the number of concatenations in a kin term.
    First of all, observe that $T(n)$ obeys the following recurrence:
    \begin{align}
        T(n) = 2 T(n / 2) + O(n^2)
    \end{align}
    Indeed, we make a recursive call exactly \textit{two} times and each call receives roughly the half of the specified term.
    During execution the function passes through two nested cycles, so one call costs us $O(n^2)$.

    Secondly, to solve a recurrence, we use the \textbf{Master Method} from the famous book \textit{Introduction to Algorithms}\cite{cormen} by Cormen et al.
    In our case $a = 2$, $b = 2$, and $f(n) = O(n^2)$. Observe, that if we take $\epsilon$ to be any positive real number below
    one: $0 < \epsilon < 1$, then $f(n) = \Omega(n^{\log_ba + \epsilon}) = \Omega(n^{1 + \epsilon})$.

    Let us show that $f(n)$ satisfies the \textit{regularity} criterion: $af(n / b) \leqslant cf(n)$ for some constant $c < 1$.
    Indeed, just pick $c = 1/2$:
    \begin{align*}
        2f(n / 2) &\leqslant cf(n),\\
        2\frac{n^2}{4} &\leqslant cn^2,\\
        \frac{1}{2}n^2 &\leqslant cn^2,\\
        \frac{1}{2}n^2 &\leqslant \frac{1}{2}n^2
    \end{align*}
    Thus, we can use the third case from the Master Method, which tells us that $T(n) = \Theta(n^2)$.
\end{proof}

\subsection{Pursuing Confluence}
Current approach listed in \ref{algo:red} has one major disadvantage: like any other greedy algorithm it can fail to choose a
correct reduction path between two terms with equal amount of concatenations.
We can alleviate this by augmenting our rewriting system, based on $\omega$ dictionary, with a feature called \textit{confluence},
also known as \textit{Church-Rosser} property:
\begin{definition}
    An abstract term rewriting system is said to possess \textbf{confluence}, if, when two terms $N$ and $P$ can be yielded from $M$,
    then they can be reduced to the single term $Q$. Figure \ref{fig:conf} depicts this scenario.
\end{definition}
Not only we can fix our reduction algorithm by introducing this property, but also we can improve the time complexity, making it
linear.

\begin{figure}
    \centering
    \includegraphics[width=0.2\linewidth]{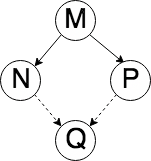}
    \caption{Confluence in a term rewriting system.}
    \label{fig:conf}
\end{figure}

One way to achieve confluence is to attach a single kinship term to any possible path in a family tree.
Observe, that English kinship terms have a specific pattern that we can exploit. All relatives who are distant enough from ego
have the following structure of their kin term:
\begin{align*}
    n^{th} \text{ cousin } m^{th} \text{ times removed}
\end{align*}
In-laws also have their own pattern, where the ending "-in-law" is appended to a valid consanguine kinship term. However, this
applies only to people, who are linked together by only one nuptial bond. For instance, there is no single term for a husband of
ego's wife's sister. These relations can be accounted for by \textit{prefixing} "-in-law" with an ordinal, which shows the number
of marital bonds that one should pass in order to go to such person. Under this representation, last example will receive the term
"brother \textit{twice}-in-law". After generalizing that scheme we get a pattern that looks like this:
\begin{align*}
    \langle \text{Consanguine kinship term} \rangle k^{th} \text{ times-in-law}
\end{align*}

We can also view this as an attribution of a distinct \textit{natural number} to every vertex with ego as an \textit{offset}, thus
imposing a natural ordering on the set of all vertices. This assignment can be made in such a way that reducing a kinship term $n$
will correspond \textit{exactly} to the calculation of $n$ from some arithmetic expression like $5 \cdot (2 + 3) + 4$, thus
providing a \textbf{translation} between the language of all valid arithmetic expressions and our formal language of kinship
$\mathcal{L}$.

However, it is not the topic of this paper, so we are leaving it to the considerations of future researches.

\section{Incorporating Time}
Now the only matter that is left to address is an adequate representation of time. Historically, there are two main approaches
for modelling time: point-based and interval-based. The former treats time as a single continuous line with distinguished points
as specific \textit{events}, and the latter uses \textit{segments} of that line to represent time entries.
The latter method was used in Allen's interval algebra\cite{allen}. For the sake of simplicity we chose the former approach,
because it can easily imitate intervals by treating them as endpoints of a line segment.

Not only we want to be able to express different events by modelling them as points on a line, but also we want to orient ourselves on
that line, i.e. to know where we are, which events took place in the past and which will happen in the future. Thus, we need to
select exactly one point that will stand for the present moment and we call it "now". Then all points to the left will be in the past,
and all point to the right will be in the future. Also, notice that any set with total ordering on it will suffice, because the
continuous nature of a line is redundant in point-based model. Collecting everything together, we have the following formalisation
of time:
\begin{align*}
    \mathcal{M} = \langle T, now, \leqslant \rangle
\end{align*}
Where $T$ is a non-empty set with arbitrary elements, $now \in T$, and $\leqslant$ is a total ordering relation on $T$.

Within this model we can reason about which event comes \textit{before} or \textit{after}, what events took place in the past or
in the future, and so on.

When considering family trees it is necessary to define only five predicates:
\begin{enumerate}
    \item{$Before(x, y)$ is true iff $x < y$.}
    \item{$After(x, y)$ is true iff $x > y$.}
    \item{$During(x, s, f)$ is true iff $s \leqslant x \leqslant f$.}
    \item{$Past(x)$ is true iff $Before(x, now)$.}
    \item{$Future(x)$ is true iff $After(x, now)$.}
\end{enumerate}
Those relations are the basis from which all other operations on $\mathcal{M}$ can be defined. It is also interesting to note
that, since any ordering relation generates a \textit{topology} over its structure, we can speak about time in terms of its
topological properties.

\section{KISP Language Specification}
\subsection{Grammar and Lexical Structure}
Since KISP is a dialect of LISP, it inherits some syntax from the predecessor, but generally it is a new programming language. The
grammar is presented with the help of Bacus-Naur notational technique. For the sake of simplicity, we omit angle brackets and
embolden non-terminal words. A plus, a star sign in a superscript and a question mark have the same meaning as in regular
expressions.
\begin{align*}
    \text{\textbf{term}} &::= \text{\textbf{literal}} | \text{\textbf{lambda}} | \text{\textbf{define}} | \text{\textbf{atom}} |
        (\text{\textbf{term}}^+) \\
    \text{\textbf{lambda}} &::= (\text{lambda (\textbf{reference}}^*\text{) \textbf{term}}) \\
    \text{\textbf{define}} &::= ( \text{define \textbf{reference} \textbf{term}}) \\
    \text{\textbf{reference}} &::= \text{\textbf{word}\{-\textbf{word}\}'?'?} \\
    \text{\textbf{word}} &::= \text{\textbf{letter}}^+ \\
    \text{\textbf{letter}} &::= a | b | ... | z | A | B | ... | Z \\
    \text{\textbf{atom}} &::= * | + | \text{concat} | \text{list} | \text{append} | ... \\
    \text{\textbf{literal}} &::= \text{void} | \text{true} | \text{false} | \text{people} | \text{vacant} | \text{now} |
        \text{\textbf{numeral}} | \text{\textbf{string}} \\
    \text{\textbf{numeral}} &::= \text{-?\textbf{digit}}^* \\
    \text{\textbf{digit}} &::= 0 | 1 | ... | 9 \\
    \text{\textbf{string}} &::= `\text{\textbf{symbol}}^*` \\
    \text{\textbf{symbol}} &::= \text{\textit{any non-blank ASCII symbol}}
\label{al:bnf}
\end{align*}
As we can see from the definition, there are three kinds of terminals in the grammar: literals, references and atom functions,
which are called simply \textbf{atom}s. Literals are instances of primitive types, such as \textit{Numeral}, \textit{String} or
\textit{Boolean}, or special keywords. They stands for the following: "void" represents NULL type, "people" -- a list of
all persons in a family tree, "now" -- the current time entry and "vacant" -- an empty list. References are used as definientia in
"define" terms and as names for parameters in lambda terms. It is possible for a reference to end in a question mark, which means
that it denotes an instance of \textit{Boolean} type. References can we written in so-called \textit{dash case}, so "long-name"
and "very-long-name" are both legal. The only exception are names which start with the dash like "-illegal", they are invalid.

Note that we allow niladic lambdas, so, for instance, this is a valid expression: \texttt{(lambda () 'Hello, World!')}. But at the
same time \texttt{()} is not a well-formed term. We also prohibit "define" terms inside other terms, so this would not work:
\texttt{(+ 2 (define three 3))}. Strings are nested in single quotes. Integers, in KISP we call them "numerals", can start with a
zero and be prefixed by a negative sign.

Here is the complete list of all keywords in KISP: \textbf{true}, \textbf{false}, \textbf{define}, \textbf{lambda},
\textbf{people}, \textbf{now}, \textbf{void}, \textbf{if}, \textbf{vacant}. The rule is that you can use as a reference everything
you want as long as it is not a keyword, so you cannot redefine their standard behaviour, thus a programmer is unable to tamper
with inner workings of the interpreter.

As in all other dialects of LISP, a term \texttt{(f a b c ...)} means the \textit{execution} of a function $f$ with the
specified arguments $f(a, b, c, ...)$. Of course, we can construct and call the higher-order functions as usual:
\texttt{((twice square) 2)} will yield $16$, or \texttt{((compose inc inc) 0)} which just prints $2$.
\subsection{Query Examples}
In this section we will demonstrate how one can use KISP to perform various queries in a genealogical tree. Particularly, we
focus our attention on statements that express kinship terms.

Let's start with a simple task of selecting people based on a certain boolean condition. Suppose we want to query only those, who
have at least one child. This can be accomplished as follows:
\begin{verbatim}
(filter (lambda (p) (< 0 (count (children p)))) people)
\end{verbatim}
Here we iterate through the list of all people in a tree and take only those, on who defined lambda predicate evaluated to
\texttt{true}. The number of children for a particular person is calculated by counting elements of the list \texttt{(children
p)}.

The next task is to select all husbands, that is, all men who are married. This can be done in two ways: either select only
males and then discard all bachelors, or combine the two operations together in a single boolean predicate using \texttt{and}
clause:
\begin{verbatim}
(filter (lambda (p) (not (= void (spouse p))))
        (filter (lambda (p) (= 'MALE' (attr p 'sex')))
        people))
(filter (lambda (p) (and (= 'MALE' (attr p 'sex'))
                        (not (= vacant (spouse p)))))
people)
\end{verbatim}
The advantage of the second approach is that the list \texttt{people} will be iterated only once.

Now to the more advanced queries; suppose that the term \texttt{ego} stands for the user's node in an ancestry, and he wants to
know how many cousins he has:
\begin{verbatim}
(define parents (lambda (p) (join (mother p) (father p))))
(define cousins
(lambda (p) (children (children (parents (parents p))))))
(- (count (cousins ego)) 1)
\end{verbatim}
This is where the expressive power of KISP truly comes into play. Although \texttt{cousins} is not a standard KISP function, we
can easily implement it using kinship framework of KISP, which successfully utilizes the structure of natural kinship terms.
Moreover, notice how the function \texttt{parents} is expressed. Since a parent is either a mother or a father, it corresponds to
the formal kinship term $(mother \vee father)$, which is implemented as a \textit{join} of two or more lists. And because every
cousin is a grand-child of one's grandparents, it corresponds to:
\begin{align*}
(son \vee daughter)(son \vee daughter)(mother \vee father)(mother \vee father)
\end{align*}
The last decrement was made because in this scheme the \texttt{ego} itself will be included to the resulting list.

Finally, temporal queries can be expressed with the help of the type \textit{Date}. For instance, if we need to know, who, among our
relatives, was born during the WWII, we just need to evaluate:
\begin{verbatim}
(define WWII-start (date '01.09.1939'))
(define WWII-end (date '02.09.1945'))
(filter (lambda (p) (during (attr p 'birthdate') WWII-start WWII-end))
people)
\end{verbatim}
The type \textit{Date} provides all the necessary functions for working with temporal information.

Because KISP is Turing-complete and inherits LISP's capabilities of meta-programming, one can easily extend it with any
functionality that one wants.

\section{Conclusion}
In this work we designed a new programming language KISP for effectively navigating and querying temporal family trees. We described a formal mathematical model of traditional kinship, on which KISP is based.
Additionally, we tackled the problem of term reduction and discussed the possibilities for achieving confluence.

There are three requirements that we want our language to satisfy:
\begin{enumerate}
    \item{\textbf{Expressiveness}. The language should allow for any possible consanguine as well as affinal relations to be
        described.}
    \item{\textbf{Speed}. The response time must not exceed the standard for an interpreted language.}
    \item{\textbf{Simplicity}. Language should be able to express natural kinship and temporal terms as straightforward as possible.}
\end{enumerate}
Here the phrase "response time" stands for the time passed between the start and finish of a programs evaluation.
The last quality is what truly distinguishes our approach from the rest, allowing for the most obvious representation of
genealogical and temporal information. We are certain that our presented solution fully covers every one of them.

\section{Future Work}
However, there are some topics yet left to tackle in the area of kinship and genealogy management. On the theoretical side, there
is a problem of total term reduction and formal language enrichment. It is also interesting to shift attention to other languages
and cultures with different kinship structures, such as Russian or Hawaiian. The constructed formalism can be considered from the
algebraical side, focusing on its many mathematical properties as a special type of an algebraic system.

On the practical side, one can consider to improve the virtual assistant component. Besides already mentioned Voice Generation \&
Recognition technology, it can be made context-aware, which will increase its intelligence. Additionally, the family ontology can
be enhanced to incorporate information about divorces and deaths. The performance of the interpreter can be significantly improved
by introduction of \textit{Just-in-Time} compilation.

Another important step towards improvement of existing system is addressing its current limitations, such as static nature of our
genealogical database and execution time of KISP queries. It's no doubt that the schema-less approach is much more versatile,
because it does not depend on traditions and customs of a particular culture.

Further advancement may also include new data types and standard functions for KISP language. Specifically, it is beneficial to
add a \texttt{char} type that represents individual characters in a string. Another useful feature is support for \textit{variadic
lambdas} and \textit{closures}, which will significantly increase the versatility of KISP.

Moreover, one can also consider including capabilities for a logical reasoning into KISP. They will be applicable for inferring
implicit time constraints for events, whose exact date is unknown. For instance, if we are uninformed about a birthday of a
person, but we do know his parents and his children birthdays, we can justifiably bound this missing date to a specific time
interval.

\section*{Appendix}
\begin{algorithm}[caption={Kinship Term Reduction}, label={algo:red}]
input: kinship term $t$.
note: $\omega$ is a dictionary of kinship terms.
note: "leftPart(t, u)" and "rightPart(t, u)" return the sub-term of $t$
note: from the left of sub-term $u$ or from the right respectively.
note: Function "subterm(t, i, j)" returns the
note: sub-term of the kinship term $t$ between indices $i$ and $j$.
output: reduced kin term.
function shorten(t)
begin
    maxShortenableSubterm $\gets$ empty
    currentSubterm $\gets$ empty
    for i $\gets$ 0 to length(t) do
    begin
        for j $gets$ length(t) - i to 0 do
        begin
            currentSubterm $\gets$ subterm(t, i, j)
            if length(currentSubterm) $>$ length(maxShortenableSubterm)
                and $\omega$(currentSubterm) is not empty
            then
                maxShortenableSubterm = currentSubterm
        end
    end
    return shorten(leftPart(t, maxShortenableSubterm))
            $\cdot$ $\omega(maxShortenableSubterm)$
            $\cdot$ shorten(rightPart(t, maxShortenableSubterm))
end
\end{algorithm}
\bibliographystyle{IEEEtran}
\bibliography{annot}
\end{document}